\documentclass[letterpaper, 10 pt, conference]{ieeeconf}

\IEEEoverridecommandlockouts

\overrideIEEEmargins

\usepackage[normalem]{ulem}
\usepackage{cancel}
\usepackage[dvipsnames]{xcolor}

\usepackage{float}
\usepackage{amsmath}
\usepackage{amsfonts}
\usepackage{amssymb}
\usepackage{breqn}
\usepackage{cite}
\usepackage{hyperref}
\usepackage{etoolbox}
\hypersetup{
  colorlinks=true,
  citecolor=blue,
  linkcolor=blue,
  urlcolor=blue
}

\usepackage{adjustbox}
\usepackage{mathrsfs}
\newtheorem{theorem}{Theorem}
\newtheorem{lemma}{Lemma}
\newtheorem{assumption}{Assumption}

\newtheorem{remark}{Remark}
\newtheorem{definition}{Definition}


\title{\LARGE \bf
Resilient Model-Free Asymmetric Bipartite Consensus for Nonlinear Multi-Agent Systems against DoS Attacks
}

\author{Yi Zhang$^{1}$, Yichao Wang$^{2}$, Junbo Zhao$^{3}$, and Shan Zuo$^{4}$
\thanks{$^{1}$Yi Zhang, $^{2}$Yichao Wang, $^{3}$Junbo Zhao, and $^{4}$Shan Zuo are with the Department of Electrical and Computer Engineering, University of Connecticut, 371 Fairfield Way, U-4157, Storrs, Connecticut 06269-4157, USA. (E-mails: yi.2.zhang@uconn.edu; yichao.wang@uconn.edu;junbo@uconn.edu;shan.zuo@uconn.edu).}
}

\begin{document}

\maketitle
\thispagestyle{empty}
\pagestyle{empty}

\begin{abstract}
In this letter, we study an unified resilient asymmetric bipartite consensus (URABC) problem for nonlinear multi-agent systems with both cooperative and antagonistic interactions under denial-of-service (DoS) attacks. We first prove that the URABC problem is solved by stabilizing the neighborhood asymmetric bipartite consensus error. Then, we develop a distributed compact form dynamic linearization method to linearize the neighborhood asymmetric bipartite consensus error. By using an extended discrete state observer to enhance the robustness against unknown dynamics and an attack compensation mechanism to eliminate the adverse effects of DoS attacks, we finally propose a distributed resilient model-free adaptive control algorithm to solve the URABC problem. A numerical example validates the proposed results. 

\end{abstract}

\section{INTRODUCTION}
\label{sec:introduction}

The cooperative control of the multi-agent system (MAS) has become an in-demand topic in control theory. For distributed coordination, creating information flow protocols is essential for collective consensus. This issue is commonly termed as consensus or synchronization problem \cite{olfati2004consensus,ren2007information,cao2012overview}. Most results assume agents cooperate by default \cite{bu2017data}. However, trust and distrust in social networks and economic duopolies show that cooperation and antagonism often coexist in practical. Besides, constructing accurate models of these real-world systems is labor-intensive and raises issues of unmodeled dynamics and robustness. This makes control design too complex and unsuited for solving engineering issues, especially with network-induced problems \cite{zhang2019networked}. Despite these challenges, model-free adaptive control can address nonlinear systems with completely unknown dynamics \cite{hou2010novel}. 

Denial-of-service (DoS) threats highlight vulnerabilities in digital infrastructures, urging strong cybersecurity solutions. DoS attacks aim to block the transmission of data packets, which can degrade system performance and cause data packet dropouts. Consequently, it is important to explore resilient cooperative control against DoS attacks. To the best of our knowledge, limited research exists on model-free adaptive control for asymmetric bipartite consensus under DoS attacks. Moreover, research specifically targeting the multi-input multi-output (MIMO) MAS is scarce. Motivated by these challenges, in this letter, we study the unified resilient asymmetric bipartite consensus (URABC) problem for nonlinear MIMO MAS under DoS attacks.

The main contributions of this letter are: First, we prove that the URABC problem is solved by stabilizing the
neighborhood asymmetric bipartite consensus error (NABCE). Second, a distributed compact form dynamic linearization (DCFDL) method is designed to linearize the NABCE. Unlike previous methods that require global information \cite{ren2020robust,wang2020consensus}, the proposed DCFDL achieves a balance between cooperative and antagonistic objectives without using any global information. By using an extended discrete state observer (EDSO)
to enhance the robustness against unknown dynamics and an attack compensation mechanism to eliminate the adverse effects of DoS attacks, we finally propose a distributed resilient model-free adaptive control (DRMFAC) algorithm to solve the URABC problem. Compared with existing results \cite{ren2020robust,wang2020consensus}, our DRMFAC algorithm only utilizes input/output data of the nonlinear MIMO MAS without using any mathematical model of the system dynamics. Moreover, in contrast to the approaches in \cite{guo2018asymmetric} and \cite{liang2021finite} that necessitate a strongly connected communication digraph, our approach relaxes such constraint by considering a weakly connected diagraph.


\section{Preliminaries and Problem Formulation}
\label{sec:pre}


In this paper, $\left\|X^{n}\right\|$ and $\left\|X^{m \times n}\right\|$ represent the Euclidean norm and 2-norm, respectively. $\mathbf{1}_N\in\mathbb{R}^N$ is a vector with all entries are one. $\mathbf{0}$ is a vector with all entries are zero. $Q\succ 0$ denotes the matrix $Q$ is positive-definite. $\rho(A)=\mathrm{max}\{|z_1,\cdots,|z_n|\}$ denotes the spectral radius of matrix $A \in \mathbb{R}^{n \times n}$, with eigenvalue $z_l,l=1,\cdots,n$. $\|\cdot\|_{d}$ denotes an induced matrix norm satisfies $\|A x\| \leqslant \|A\|_{d}\|x\|$. $\mathbf{P}\{\mathcal{E}\}$ gives the probability of event $\mathcal{E}$. We consider a MAS consisting of one leader and $N$ followers, where the interactions among the followers are represented by a signed digraph $\mathscr{G}=(\mathscr{V}, \mathscr{E}, \mathcal{A})$, where $\mathscr{V}=\{1,2, \ldots, N\}$ is the set of vertices, $\mathscr{E} \subset \mathscr{V} \times \mathscr{V}$ denotes the set of edges, and $\mathcal{A}=[a_{ij}]\in\mathbb{R}^{N\times N}$ is the associated adjacency matrix where $a_{ij} \neq 0$ if $(j, i) \in \mathscr{E}$. The neighborhood of the agent $i$ is $\mathcal{N}_{i}=\{j\in\mathscr{V}:(j,i)\in\mathscr{E}\}$ and the self-edge $(i,i)$ satisfies $(i,i)\notin\mathscr{E}$. The in-degree matrix is defined as $\mathcal{D}=\operatorname{diag}\left(d_{i}\right)$ with $d_{i}=\sum\nolimits_{j \in \mathcal{N}_{i}}a_{ij}$. The Laplacian matrix ${\mathcal{L}}$ is defined as
$\mathcal{L} = \mathcal{D} - \mathcal{A}$. The pining gain matrix $\mathcal{G} = \mathrm{diag}(g_i)$, where $g_i$ is the pinning gain from the leader to the follower $i$.


\begin{definition}[\cite{altafini2012consensus}] 
The signed graph is structurally balanced if the vertex set $\mathscr{V}$ can be partitioned into $\mathscr{V}_{1}$ and $ \mathscr{V}_{2}$, with $\mathscr{V}_{1} \cup \mathscr{V}_{2}=\mathscr{V}$ and $\mathscr{V}_{1} \cap \mathscr{V}_{2}=\emptyset$, such that $a_{ij} \geq 0, \forall i,j \in \mathscr{V}_{\iota}$ and $a_{ij} \leq 0, \forall i \in \mathscr{V}_{\iota}, j \in \mathscr{V}_{3-\iota}$, where $\iota=\{1,2\}$.
\end{definition}

Consider the following discrete-time nonlinear MIMO MAS for the $N$ followers
\vspace{-5pt} \begin{equation}
\label{eq1}
    y_{i}(k+1)=f_{i}\left(y_{i}(k),u_{i}(k)\right),i \in \mathscr{V}
\end{equation}
where $y_{i}(k) \in \mathbb{R}^p$ and $u_{i}(k) \in \mathbb{R}^q$ are the output and input of follower $i$ at the time instant $k \in\{1,2, \ldots\}$, respectively. $f_{i}(\cdot)$ is an unknown smooth nonlinear function.
\begin{assumption}\label{ass:1}
The signed digraph $\mathcal{G}=(\mathscr{V}, \mathscr{E}, \mathcal{A})$ is structurally balanced and contains a spanning tree with the leader as the root.
\end{assumption}


\begin{definition}[Unified Asymmetric Bipartite Consensus]\label{obj:1}
Given a desired reference value $y_{d}$ issued by the leader, the unified asymmetric bipartite consensus objective is to achieve
\vspace{-5pt} \begin{equation}
\label{eq2}
\left\{\begin{aligned}
&\lim _{k \to \infty} y_{i}(k)=my_{d},  &i \in \mathscr{V}_{1} \\
&\lim _{k \to \infty} y_{i}(k)=-n y_{d}, &i \in \mathscr{V}_{2}
\end{aligned},\quad \mathscr{V}_{1} \cup \mathscr{V}_{2}=\mathscr{V}\right.
\end{equation}
\end{definition}
where $m$ and $n$ are influence coefficients, which are positive. \eqref{eq2} can be further formulated as
\vspace{-5pt} \begin{equation}
\label{eq3}
\mathop {\lim }_{k \to \infty } {e_{{y_i}}}(k) \equiv \mathop {\lim }_{k \to \infty } {\bar y_i}(k) - {y_d} = 0,i \in \mathscr{V}
\end{equation}
where $e_{y_i}(k)$ is referred as the local asymmetric bipartite consensus error. ${\bar y_i}(k) = \frac{{{y_i}(k)}}{{{s_i}}}$, where $s_{i}= \left\{\begin{aligned}m,& \quad i \in \mathscr{V}_{1} \\
-n, & \quad i \in \mathscr{V}_{2} \end{aligned}\right.$. 

The global form of $e_{y_i}(k)$ is ${e_y}(k) = \bar y(k) - {{\mathbf{1}}_N} \otimes y_d$, where $\bar y (k)=\begin{bmatrix}
    \bar y_1^{\mathrm{T}}(k),\cdots,\bar y_N^{\mathrm{T}}(k)
\end{bmatrix}^{\mathrm{T}}$.
\begin{remark} To achieve leader-following consensus control like \eqref{eq2}, Assumption \ref{ass:1} is a necessary
and sufficient condition. For the asymmetric bipartite consensus objective \eqref{eq2}, the training procedure using lower limb rehabilitation robots for patients is a real-world application \cite{wolbrecht2008optimizing}. The fact that robots give unequal auxiliary torques to both legs establishes a unique scaling and assistance or resistance relation concerning the predicted force or trajectory \cite{hussain2013adaptive}. Thus, delving into ways to achieve the bipartite consensus objective related to an asymmetric state holds significance. Notably, the well-explored bipartite consensus control is a special case of the proposed asymmetric bipartite consensus objective \eqref{eq2} when setting $m=1$ and $n=1$.
\end{remark}
\begin{assumption}\label{ass:2}
Partial derivative $\frac{\partial f_{i}(\cdot, \cdot)}{\partial u_{i}(k)}$ is continuous.
\end{assumption} 

\begin{assumption}\label{ass:3}
The system \eqref{eq1} satisfies the generalized Lipschitz condition, i.e., for time instants $k+1, k \geq 0$ and {$u_{i}(k+1) \neq u_{i}(k)$}, there exists a positive constant $\phi_{i}^y$ such that $\|\Delta y_{i}(k+1)\|\leq \phi_{i}^y\|\Delta u_{i}(k)\|$, where $\Delta y_{i}(k+1)=y_{i}(k+1)-y_{i}(k)$ and $\Delta u_{i}(k)=u_{i}(k)-u_{i}(k-1)$.
\end{assumption}

\begin{remark}
\label{rem:2}
Assumption \ref{ass:2} serves as a broad constraint for the nonlinear system. Assumption \ref{ass:3} suggests that the system output increment is constrained by input increment, which is
satisfied in real-world systems. Considering the controller's design to prevent rapid shifts in \cite{Xu2014}, that is $\Delta u_{i}(k)$ is bound, and that bounded input alterations lead to bounded output changes in \cite{Hou2013}.
\end{remark}

\begin{lemma}[\cite{Hui2019}]\label{lem:1}For any time instant $k$, if system \eqref{eq1} satisfies Assumptions \ref{ass:2} and \ref{ass:3} and $\left\|\Delta u_{i}(k)\right\| \neq 0$, a pseudo-partitioned Jacobian matrix (PPJM) $\Phi_{i}^y(k)$ exists such that system \eqref{eq1} can be written as 
\vspace{-5pt} \begin{equation}
\label{eq4}
\Delta y_{i}(k+1)=\Phi_{i}^y(k) \Delta u_{i}(k)    
\end{equation}
where $\Phi_{i}^y(k)\in \mathbb{R}^{p\times q}$ and $\left\|\Phi_{i}^y(k)\right\| \leqslant  \phi_{i}^y$.
\end{lemma}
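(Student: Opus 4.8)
The plan is to follow the standard compact-form dynamic-linearization argument and construct $\Phi_i^y(k)$ by hand. First I would rewrite the one-step output increment by inserting a common term,
\begin{align*}
\Delta y_i(k+1) ={}& f_i(y_i(k),u_i(k)) - f_i(y_i(k),u_i(k-1)) \\
&{}+ f_i(y_i(k),u_i(k-1)) - f_i(y_i(k-1),u_i(k-1)),
\end{align*}
so that the increment decomposes into a part driven by the current control increment $\Delta u_i(k)$ and a residual $\varepsilon_i(k) := f_i(y_i(k),u_i(k-1)) - f_i(y_i(k-1),u_i(k-1))$ that depends only on the past output change.

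Next, since Assumption~\ref{ass:2} makes $\partial f_i/\partial u_i$ continuous, I would apply the mean value theorem (componentwise in the $p$ output channels, or in integral-remainder form) to the first bracket to write it as $\Psi_i(k)\,\Delta u_i(k)$, where $\Psi_i(k)\in\mathbb{R}^{p\times q}$ is assembled from partial derivatives of $f_i$ with respect to $u_i$ evaluated on the segment between $u_i(k-1)$ and $u_i(k)$. For the residual, I would use the hypothesis $\|\Delta u_i(k)\|\neq 0$: the linear equation $\varepsilon_i(k) = \Xi_i(k)\,\Delta u_i(k)$ in the unknown matrix $\Xi_i(k)$ is always solvable (for instance $\Xi_i(k) = \varepsilon_i(k)\,\Delta u_i(k)^{\mathrm{T}}/\|\Delta u_i(k)\|^2$). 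Setting $\Phi_i^y(k) = \Psi_i(k) + \Xi_i(k)$ then yields \eqref{eq4}. For the norm bound I would observe that $\Phi_i^y(k)$ is not uniquely pinned down by \eqref{eq4} (only its action on the direction $\Delta u_i(k)$ is determined), so it suffices to exhibit one admissible choice satisfying the bound; taking the minimum-norm solution $\Phi_i^y(k) = \Delta y_i(k+1)\,\Delta u_i(k)^{\mathrm{T}}/\|\Delta u_i(k)\|^2$ gives $\|\Phi_i^y(k)\| = \|\Delta y_i(k+1)\|/\|\Delta u_i(k)\| \le \phi_i^y$ directly from the generalized Lipschitz condition of Assumption~\ref{ass:3}.

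I expect the main obstacle to be the mean value step: for a vector-valued $f_i$ there is in general no single intermediate point $\xi$ with $f_i(b)-f_i(a) = \partial_u f_i(\xi)(b-a)$, so I would either argue channel by channel (a different intermediate point per output component) or replace the pointwise derivative by $\int_0^1 \partial_u f_i\big(y_i(k),\,u_i(k-1)+t\,\Delta u_i(k)\big)\,dt$. A secondary point worth stating carefully is that the construction only guarantees existence of such a (generally time-varying and non-unique) PPJM, not a canonically defined Jacobian, which is precisely why the uniform bound $\phi_i^y$ must be extracted from Assumption~\ref{ass:3} rather than from boundedness of $\partial f_i/\partial u_i$ alone.
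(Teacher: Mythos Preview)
The paper does not give its own proof of Lemma~\ref{lem:1}; the result is quoted from \cite{Hui2019} and used as a black box. Your argument is the standard compact-form dynamic-linearization construction and is correct, including the care you take with the vector-valued mean value step and with extracting the norm bound from Assumption~\ref{ass:3} via the minimum-norm solution $\Delta y_i(k+1)\,\Delta u_i(k)^{\mathrm T}/\|\Delta u_i(k)\|^2$. For comparison, the paper does deploy exactly this machinery in its proof of Theorem~\ref{thm:1} (the DCFDL result for $\xi_i$): it adds and subtracts an intermediate term, applies the differential mean value theorem to obtain a Jacobian part $\bar F_i^{*}(k)\,\Delta u_i(k)$, absorbs the residual by solving a ``data equation'' $\Psi_i(k)=E_i(k)\,\Delta u_i(k)+E_j(k)\,\Delta u_j(k)$, and then bounds $\|\Phi_i(k)\|$ directly from the generalized Lipschitz assumption. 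So your proposal matches both the cited source's approach and the technique the paper itself uses for the analogous statement.
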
 
\begin{remark}\label{rem:3}
\eqref{eq4} represents a common linear data model frequently employed in the model-free adpative control algorithm design for MAS.
\end{remark}
\begin{lemma}
[Gershgorin Disk Theorem\cite{bell1965gershgorin}]\label{lem:2}
For a complex matrix $C=\left[c_{i j}\right] \in \mathbb{C}^{n\times n}$, there exists a Gershgorin disk $D_{i}=\bigg\{z\bigg|\bigg.| z-c_{i i}|\leqslant\sum_{j=1, j \neq i}^{n}| c_{i j}| \bigg\}$, {$z \in C, i=1, \ldots, n$}. Then all eigenvalues of $C$ lie in the unions of the circle, i.e., $ z\in D_{C}=\bigcup_{i=1}^{N} D_{i}$.
\end{lemma}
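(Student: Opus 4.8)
The plan is to argue directly from the eigenvalue--eigenvector relation, exploiting the coordinate of an eigenvector having largest modulus. First I would fix an arbitrary eigenvalue $z$ of $C$ and let $x=[x_1,\dots,x_n]^{\mathrm{T}}\neq\mathbf{0}$ be an associated eigenvector, so that $Cx=zx$. Since $x\neq\mathbf{0}$, there exists an index $i$ with $|x_i|=\max_{1\leq j\leq n}|x_j|>0$; this is precisely the index whose Gershgorin disk will be shown to contain $z$.

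Next I would expand the $i$-th row of $Cx=zx$, namely $\sum_{j=1}^{n}c_{ij}x_j=zx_i$, and isolate the diagonal term to get $(z-c_{ii})x_i=\sum_{j\neq i}c_{ij}x_j$. Dividing by $x_i$ (nonzero by the choice of $i$) and applying the triangle inequality yields $|z-c_{ii}|\leq\sum_{j\neq i}|c_{ij}|\,|x_j|/|x_i|$. By maximality of $|x_i|$, each ratio $|x_j|/|x_i|$ is at most $1$, hence $|z-c_{ii}|\leq\sum_{j=1,\,j\neq i}^{n}|c_{ij}|$, i.e. $z\in D_i$. Since $D_i\subseteq D_C=\bigcup_{i=1}^{n}D_i$, we conclude $z\in D_C$, and because $z$ was an arbitrary eigenvalue, all eigenvalues of $C$ lie in $D_C$.

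There is no substantial obstacle in this argument; the only points requiring care are the justification that the maximizing index $i$ satisfies $x_i\neq0$ (immediate from $x\neq\mathbf{0}$) and the observation that the estimate produced coincides exactly with the radius defining $D_i$. I would also note that the proof uses nothing about $C$ beyond its scalar entries and the triangle inequality, so it applies verbatim to complex matrices, which is the setting of the lemma.
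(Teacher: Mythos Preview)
Your argument is the standard and correct proof of the Gershgorin Disk Theorem: pick an eigenvector, look at the coordinate of maximal modulus, and bound using the triangle inequality. There is nothing to compare against here, since the paper does not supply its own proof of this lemma---it is stated with a citation to \cite{bell1965gershgorin} and used as a tool in the proof of Theorem~\ref{thm:1}.
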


\begin{lemma}[\cite{ortega2000iterative}]\label{lem:3}
For $A \in \mathbb{R}^{n \times n}$, there exists an induced consistent matrix norm $\|A\|_{d} \leqslant \rho(A)+c$, where $\rho(A)$ is the spectral radius of $A$ and $c$ is a positive constant.
\end{lemma}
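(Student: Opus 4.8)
The plan is to prove Lemma \ref{lem:3} by an explicit construction of the desired norm, via Schur triangularization followed by a diagonal rescaling. First I would invoke the Schur decomposition: there is a unitary matrix $U$ and an upper triangular matrix $T = U^{*} A U$ whose diagonal entries are exactly the eigenvalues $z_{1},\dots,z_{n}$ of $A$. Working with $T$ rather than attempting to diagonalize $A$ is what lets the argument cover matrices that are not diagonalizable.

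Next, for a parameter $t>0$ I would introduce the diagonal scaling $D_{t}=\operatorname{diag}(t,t^{2},\dots,t^{n})$ and pass to the similar matrix $D_{t}^{-1} T D_{t}$. A direct computation shows its $(i,j)$ entry equals $t^{\,j-i}T_{ij}$, so the diagonal is unchanged (still the eigenvalues of $A$), while every strictly upper-triangular entry acquires a positive power of $t$. Estimating the maximum absolute row sum yields $\|D_{t}^{-1} T D_{t}\|_{\infty} \le \max_{i}|z_{i}| + \sum_{i<j} t^{\,j-i}|T_{ij}| = \rho(A) + O(t)$. Hence, given any $c>0$, one fixes $t$ small enough that $\|D_{t}^{-1} T D_{t}\|_{\infty} \le \rho(A)+c$.

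Finally I would define the vector norm $\|x\|_{d} := \|(U D_{t})^{-1} x\|_{\infty}$, which is a genuine norm because $U D_{t}$ is invertible. The matrix norm it induces satisfies $\|A\|_{d} = \sup_{x\neq 0}\|(U D_{t})^{-1} A x\|_{\infty}/\|(U D_{t})^{-1} x\|_{\infty} = \|(U D_{t})^{-1} A (U D_{t})\|_{\infty} = \|D_{t}^{-1} T D_{t}\|_{\infty} \le \rho(A)+c$, using $U^{-1}=U^{*}$. Since every operator norm is automatically submultiplicative and consistent with its generating vector norm, the lemma follows.

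The step requiring the most care is the norm-change identity in the last paragraph: one must verify that replacing the $\infty$-geometry by the $(U D_{t})$-weighted geometry turns the operator norm of $A$ into the ordinary $\infty$-norm of the conjugated matrix $D_{t}^{-1} T D_{t}$. The rescaling bookkeeping in the second step — confirming the $t^{\,j-i}$ factors and that the residual off-diagonal sum vanishes as $t\to 0^{+}$ — is routine but is the crux of why the sharp bound $\rho(A)+c$, rather than something larger, is attainable.
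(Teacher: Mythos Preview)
The paper does not supply its own proof of Lemma~\ref{lem:3}; it is quoted without argument from the cited reference. Your construction---Schur triangularization followed by the diagonal rescaling $D_t=\operatorname{diag}(t,\dots,t^n)$ and the weighted $\infty$-norm $\|x\|_d=\|(UD_t)^{-1}x\|_\infty$---is exactly the classical proof (it is the argument found in Ortega's book and in standard matrix-analysis texts), and all of the steps you outline are correct. One minor caveat: since the statement takes $A\in\mathbb{R}^{n\times n}$ while the complex Schur form may involve a unitary $U$ with non-real entries, the resulting vector norm on $\mathbb{R}^n$ is obtained by restricting a norm on $\mathbb{C}^n$; this is harmless but worth a one-line remark when you write the details out.
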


To achieve the unified asymmetric bipartite consensus objective \eqref{eq2}, we introduce the following NABCE
\vspace{-5pt} \begin{equation}
\label{eq5}
\xi_{i}(k)=\left\{\begin{aligned} 
&\sum_{j \in \mathscr{V}_{1}} a_{ij}\left(y_{j}(k)-\operatorname{sgn}\left(a_{ij}\right) y_{i}(k)\right)
\\&+\sum_{o \in \mathscr{V}_{2}} a_{io}\left(y_{o}(k)-m^{-1}n \operatorname{sgn}\left(a_{io}\right) y_{i}(k)\right)
\\&+g_{i}\left(y_{d}-m^{-1}\operatorname{sgn}\left(g_{i}\right) y_{i}(k)\right),i \in \mathscr{V}_{1}
\\&\sum_{j \in \mathscr{V}_{1}} a_{ij}\left(y_{j}(k)-n^{-1} m\operatorname{sgn}\left(a_{ij}\right) y_{i}(k)\right) 
\\&+\sum_{o \in \mathscr{V}_{2}} a_{io}\left(y_{o}(k)-\operatorname{sgn}\left(a_{io}\right) y_{i}(k)\right) \\&+g_{i}\left(y_{d}-n^{-1} \operatorname{sgn}\left(g_{i}\right) y_{i}(k)\right),i \in \mathscr{V}_{2}
\end{aligned}\right.
\end{equation}

Based on \eqref{eq5}, we then present the following lemma to obtain a necessary and sufficient condition to achieve the unified asymmetric bipartite consensus objective \eqref{eq2}.
\begin{lemma}
\label{lem:4}
Under Assumption \ref{ass:1}, consider MAS \eqref{eq1}, the unified asymmetric bipartite consensus objective \eqref{eq2} is achieved if and only if $\lim\nolimits_{k\to\infty}\xi_i(k)=\mathbf{0}$.
\end{lemma}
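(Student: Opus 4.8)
The plan is to show that the stacked NABCE $\xi(k)=\big[\xi_1^{\mathrm T}(k),\dots,\xi_N^{\mathrm T}(k)\big]^{\mathrm T}$ is obtained from the global error $e_y(k)=\bar y(k)-\mathbf 1_N\otimes y_d$ by a \emph{static, invertible} linear map, so that $\xi(k)\to\mathbf 0$ and $e_y(k)\to\mathbf 0$ are equivalent; since $e_y(k)\to\mathbf 0$ is, by definition \eqref{eq3}, exactly objective \eqref{eq2}, the lemma follows in both directions.

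First I would rewrite \eqref{eq5} in terms of the normalized outputs $\bar y_i(k)=y_i(k)/s_i$. Under Assumption~\ref{ass:1} structural balance fixes every sign in \eqref{eq5}: $\operatorname{sgn}(a_{ij})=+1$ when $i,j$ lie in the same part of the partition, $\operatorname{sgn}(a_{ij})=-1$ when they lie in different parts, and $\operatorname{sgn}(g_i)=\operatorname{sgn}(s_i)$ (the leader being taken, without loss of generality, as cooperative with $\mathscr V_1$). Substituting $y_j=s_j\bar y_j$ and simplifying the three groups of terms in \eqref{eq5}, I expect each summand to collapse: a generic neighbor term becomes $|a_{ij}|\,|s_j|\big(\bar y_j(k)-\bar y_i(k)\big)$ and the pinning term becomes $-|g_i|\big(\bar y_i(k)-y_d\big)=-|g_i|\,e_{y_i}(k)$ — the internal normalizers $m^{-1}n$, $n^{-1}m$, $m^{-1}$, $n^{-1}$ are precisely what is needed to cancel the factors $s_j$. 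Stacking over $i$ gives a global identity $\xi(k)=-\big((E\mathcal M)\otimes I_p\big)e_y(k)$, where $E=\operatorname{diag}\big(\operatorname{sgn}(s_i)\big)$ and $\mathcal M=\bar{\mathcal D}+|\mathcal G|-\bar{\mathcal A}\,|S|$, with $\bar{\mathcal A}=[\,|a_{ij}|\,]$, $|S|=\operatorname{diag}(|s_i|)$, $|\mathcal G|=\operatorname{diag}(|g_i|)$, and $\bar{\mathcal D}$ the diagonal matrix of row sums of $\bar{\mathcal A}|S|$; the constant $y_d$ contributions cancel because $(\bar{\mathcal D}-\bar{\mathcal A}|S|)\mathbf 1_N=\mathbf 0$, hence $\mathcal M\mathbf 1_N=|\mathcal G|\mathbf 1_N$.

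It then remains to show $\mathcal M$ is nonsingular. Here $\mathcal M$ is a $Z$-matrix (nonpositive off-diagonal, nonnegative diagonal) with row sums $|g_i|\ge 0$, so it is weakly diagonally dominant with strict dominance exactly at the pinned nodes; writing $\mathcal M=\Theta\big(I-\Theta^{-1}\bar{\mathcal A}|S|\big)$ with $\Theta=\bar{\mathcal D}+|\mathcal G|$ — whose diagonal entries are strictly positive because Assumption~\ref{ass:1} forces every follower to have in-degree at least one — the matrix $\Theta^{-1}\bar{\mathcal A}|S|\ge 0$ is row-substochastic, and the spanning-tree hypothesis of Assumption~\ref{ass:1} guarantees a directed path from every node to some pinned (strictly-dominant) node; by the standard argument for such nonnegative matrices this yields $\rho\big(\Theta^{-1}\bar{\mathcal A}|S|\big)<1$, so $I-\Theta^{-1}\bar{\mathcal A}|S|$, and therefore $\mathcal M$, is invertible (indeed $\mathcal M$ is a nonsingular $M$-matrix). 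The Gershgorin disk theorem (Lemma~\ref{lem:2}) can be invoked here to localize the eigenvalues of $\mathcal M$ in the closed right half-plane and, for the pinned rows, away from the origin. With $\mathcal M$ — hence $(E\mathcal M)\otimes I_p$ — invertible, $\lim_{k\to\infty}\xi_i(k)=\mathbf 0$ for all $i$ is equivalent to $\lim_{k\to\infty}e_y(k)=\mathbf 0$, which by \eqref{eq3} is equivalent to \eqref{eq2}.

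I expect the sign-and-scaling bookkeeping of the first step to be the most error-prone part — tracking $\operatorname{sgn}(a_{ij})$, $\operatorname{sgn}(g_i)$, and the four different internal normalizers across the $\mathscr V_1/\mathscr V_2$ cross terms — while the conceptual core, and the only place the spanning-tree part of Assumption~\ref{ass:1} is truly needed, is the nonsingularity of $\mathcal M$.
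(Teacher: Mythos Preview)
Your approach is essentially the same as the paper's: both rewrite the stacked NABCE as a static linear map of $e_y(k)$ and infer the equivalence from invertibility of that map, with the paper obtaining $\xi(k)=-\big((\bar{\mathcal L}+\mathcal G)\otimes I_p\big)e_y(k)$ via the normalization $\bar a_{ij}=a_{ij}/s_j$ while you use the equivalent (up to a diagonal sign/scale) coefficients $|a_{ij}|\,|s_j|$. The paper's proof simply asserts the ``if and only if'' at that point, whereas you actually supply the missing step---the $M$-matrix/spectral-radius argument that $\bar{\mathcal L}+\mathcal G$ (your $\mathcal M$) is nonsingular under the spanning-tree part of Assumption~\ref{ass:1}---so your proof is more complete than the paper's own.
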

\begin{proof}
We introduce an associated asymmetric adjacency matrix $\bar{\mathcal{A}}=[\bar{a}_{ij}]$ with $\bar{a}_{ij}=\frac{a_{ij}}{s_j}$, where $s_{j}= \left\{\begin{aligned}m,& \quad i \in \mathscr{V}_{1} \\
-n, & \quad i \in \mathscr{V}_{2} \end{aligned}\right.$. The corresponding asymmetric in-degree matrix is defined as $\bar{\mathcal{D}}=\operatorname{diag}\left(\bar{d}_{i}\right)$ with $\bar{d}_{i}=\sum\nolimits_{j \in \mathcal{N}_{i}}\bar{a}_{ij}$. The asymmetric Laplacian matrix $\bar{\mathcal{L}}$ is defined as
$\bar{\mathcal{L}} = \bar{\mathcal{D}} - \bar{\mathcal{A}}$.

We then reformulate $\xi_i(k)$ in \eqref{eq5} as
\vspace{-5pt} \begin{equation}
\label{eq6}
\xi_{i}(k)= 
\sum\nolimits_{j \in \mathcal{N}_{i}} \bar{a}_{ij}\left(\bar{y}_{j}(k)-\bar{y}_{i}(k)\right)+g_{i}\left(y_{d}- \bar{y}_{i}(k)\right)
\end{equation}
     The global form of \eqref{eq6} is
\vspace{-5pt} \begin{equation}
\label{eq7}
\begin{aligned}
    \xi(k)=&-\left(\big(\bar{\mathcal{L}}+\mathcal{G}\big)\otimes I_p\right)\left(\bar{y}(k) - {{\mathbf{1}}_N} \otimes y_d\right)
    \\=&-\left(\big(\bar{\mathcal{L}}+\mathcal{G}\big)\otimes I_p\right)e_y(k)
\end{aligned}
\end{equation}
where $\xi(k) = \begin{bmatrix} \xi_1^{\mathrm{T}}(k),\cdots,\xi_N^{\mathrm{T}}(k)\end{bmatrix}^{\mathrm{T}}$. As seen, \eqref{eq2} is achieved if and only if $\lim\limits_{k\to\infty}\xi(k)=\mathbf{0}$.
\end{proof} 

Besides, DoS attacks aim to block the transmission of data packets, consequently reducing system performance and causing data packet dropouts. The data packets the controller receives during DoS attacks are represented as 
\vspace{-5pt} \begin{equation}
\label{eq8}
\bar{\xi}_{i}(k)=H_{i}(k) \xi_{i}(k)
\end{equation} where $H_{i}(k)=\mathrm{diag}(h_{i,r}(k)), r=1,\cdots,p$. If DoS attacks succeed in the $r$-th measurement channel, $h_{i,r}(k)=0$; otherwise, $h_{i,r}(k)=1$. Furthermore, the attack probability follows a Bernoulli distribution, and we presuppose that $\mathbf{P}\left\{h_{i,r}(k)=1\right\}=\bar{h}_{i,r}, \mathbf{P}\left\{h_{i,r}(k)=0\right\}=1-\bar{h}_{i,r}$, where $\bar{h}_{i,r}$ is a random probability value.

Now we introduce the following URABC problem.
\begin{definition}
\label{prb:1}
Given Assumptions \ref{ass:1}, \ref{ass:2}, and \ref{ass:3}, consider MAS \eqref{eq1} under DoS attacks modelled in \eqref{eq8}, {\it the URABC problem} is to develop a DRMFAC algorithm such that the asymmetric bipartite consensus error $e_{y_i}(k)$ in \eqref{eq3} is bounded, i.e., there exists a positive constant $b_{i}$ such that $\left\|e_{y_i}(k)\right\| \leqslant b_{i}$.
\end{definition}

Based on \eqref{eq6}, we obtain  
\vspace{-5pt} \begin{equation}
\label{eq9}
\begin{aligned}
\xi_{i}(k+1)=&
\sum_{j \in \mathcal{N}_{i}} \frac{\bar{a}_{ij}}{s_j}y_{j}(k+1)-\sum_{j \in \mathcal{N}_{i}} \frac{\bar{a}_{ij}}{s_i}y_{i}(k+1)
\\&+g_{i}y_d- \frac{g_{i}}{s_i}y_{i}(k+1)
\end{aligned}
\end{equation}
Then, taking \eqref{eq1} into \eqref{eq9} yields
\vspace{-5pt} \begin{equation}
\label{eq10}
\begin{gathered}
  {\xi _i}(k + 1) = \sum\limits_{j \in {\mathcal{N}_i}} {\frac{{{{\bar a}_{ij}}}}{{{s_j}}}} {f_j}({y_j}(k),{u_j}(k)) - \; \hfill \\
  \sum\limits_{j \in {\mathcal{N}_i}} {\frac{{{{\bar a}_{ij}}}}{{{s_i}}}} {f_i}({y_i}(k),{u_i}(k)) + {g_i}{y_d} - \frac{{{g_i}}}{{{s_i}}}{f_i}({y_i}(k),{u_i}(k)) \hfill \\ 
\end{gathered}
\end{equation}

It can be observed that $\xi_{i}(k+1)$ is a linear combination of multiple nonlinear functions of $u_{i}(k), {y}_{i}(k), u_{j}(k)$, and ${y}_{j}(k)$. Hence, we rewrite $\xi_{i}(k+1)$ as
\vspace{-5pt} \begin{equation}
\label{eq11}
\xi_{i}(k+1)=\bar{f}_{i}\bigg(y_{i}(k), u_{i}(k), y_{j}(k), u_{j}(k)\bigg)
\end{equation}where $\bar {f}_{i}(\cdot)$ is an unknown smooth nonlinear function representing the nonlinear relationship between $\xi_i(k+1)$ and $y_{i}(k)$, $u_i(k)$, $y_{j}(k)$ and $u_j(k)$.

\begin{assumption}\label{ass:4}
For any $k$, $\frac{\left\|\Delta u_{j}(k)\right\|}{\left\|\Delta u_{i}(k)\right\|}<\sigma_{ij}$, $i, j \in \mathscr{V}$, where $\sigma_{ij}$ is a positive constant.
\end{assumption}

We present the following DCFDL for the unified asymmetric bipartite consensus of nonlinear MAS.

\begin{theorem}\label{thm:1}
Given Assumptions \ref{ass:1}, \ref{ass:3}, and \ref{ass:4}, and assume condition $\left\|u_{i}(k)\right\|>\epsilon_{i}$ holds, where $\epsilon_{i}$ is a positive constant, then there exists a PPJM $\Phi_{i}(k)$ such
that \eqref{eq11} is rewritten as
\vspace{-5pt} \begin{equation}
\label{eq12}
\Delta \xi_{i}(k+1)=\Phi_{i}(k) \Delta u_{i}(k)
\end{equation}where $\Delta \xi_{i}(k+1)=\xi_{i}(k+1)-\xi_{i}(k)$ and $\Phi_{i}(k) \in \mathbb{R}^{p \times q }$ is bounded, i.e., $\left\|\Phi_{i}(k)\right\| \leqslant \phi_{i}$, where $\phi_{i}$ is a 
 positive constant.\end{theorem}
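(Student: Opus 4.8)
The plan is to adapt the standard compact-form dynamic linearization argument of model-free adaptive control to the distributed quantity $\xi_i$, in three stages: (i) establish a generalized-Lipschitz-type bound $\|\Delta\xi_i(k+1)\|\le L_i\|\Delta u_i(k)\|$ with the input increment taken at the \emph{same} time step; (ii) use the mean value theorem on $\bar f_i$ to split off the $\Delta u_i(k)$ contribution; (iii) absorb the leftover terms into a single bounded matrix by solving a linear matrix equation. For stage (i) I would start from \eqref{eq10}: under Assumption~\ref{ass:1} the scalings $s_i$ and the entries $\bar a_{ij}$ are well defined, and \eqref{eq10} exhibits $\xi_i(k+1)$ as a fixed linear combination of the one-step-ahead outputs $f_l(y_l(k),u_l(k))=y_l(k+1)$, $l\in\{i\}\cup\mathcal N_i$, plus the constant $g_iy_d$. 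Hence $\Delta\xi_i(k+1)$ is the same linear combination of the increments $\Delta y_l(k+1)$; applying Assumption~\ref{ass:3} to each follower gives $\|\Delta y_l(k+1)\|\le\phi_l^y\|\Delta u_l(k)\|$, and Assumption~\ref{ass:4} replaces each neighbor bound by $\|\Delta u_j(k)\|<\sigma_{ij}\|\Delta u_i(k)\|$. Collecting the coefficients $\bar a_{ij}/s_j$, $\bar a_{ij}/s_i$, $g_i/s_i$ yields $\|\Delta\xi_i(k+1)\|\le L_i\|\Delta u_i(k)\|$ with $L_i=\sum_{j\in\mathcal N_i}|\bar a_{ij}/s_j|\,\phi_j^y\,\sigma_{ij}+\big(\sum_{j\in\mathcal N_i}|\bar a_{ij}/s_i|+|g_i/s_i|\big)\phi_i^y$.

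For stage (ii), note first that $\bar f_i$ in \eqref{eq11} inherits smoothness from the $f_l$. Writing $\Delta\xi_i(k+1)=\bar f_i\big(y_i(k),u_i(k),y_j(k),u_j(k)\big)-\bar f_i\big(y_i(k-1),u_i(k-1),y_j(k-1),u_j(k-1)\big)$, I would insert and subtract $\bar f_i\big(y_i(k),u_i(k-1),y_j(k),u_j(k)\big)$ and apply the mean value theorem (in its integral, vector-valued form) to each of the two differences. This gives $\Delta\xi_i(k+1)=\frac{\partial\bar f_i}{\partial u_i}\big|_{*}\Delta u_i(k)+\psi_i(k)$, where $\psi_i(k)=\frac{\partial\bar f_i}{\partial y_i}\big|_{*}\Delta y_i(k)+\sum_{j\in\mathcal N_i}\frac{\partial\bar f_i}{\partial y_j}\big|_{*}\Delta y_j(k)+\sum_{j\in\mathcal N_i}\frac{\partial\bar f_i}{\partial u_j}\big|_{*}\Delta u_j(k)$ and the starred partials are averaged Jacobians along the connecting segments. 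Since $\bar f_i$ is continuously differentiable, its partials are bounded on the compact set traversed by the signals, in particular $\|\partial\bar f_i/\partial u_i|_{*}\|\le\bar B_i$; and since $\psi_i(k)=\Delta\xi_i(k+1)-\frac{\partial\bar f_i}{\partial u_i}\big|_{*}\Delta u_i(k)$, stage (i) immediately yields $\|\psi_i(k)\|\le(L_i+\bar B_i)\|\Delta u_i(k)\|$.

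For stage (iii), I would regard $\psi_i(k)=\eta_i(k)\Delta u_i(k)$ as a linear equation in the unknown $\eta_i(k)\in\mathbb R^{p\times q}$. The stated non-degeneracy condition guarantees $\Delta u_i(k)\neq\mathbf 0$, so a bounded solution exists, e.g. $\eta_i^{*}(k)=\psi_i(k)\Delta u_i(k)^{\mathrm T}/\|\Delta u_i(k)\|^{2}$, for which $\|\eta_i^{*}(k)\|=\|\psi_i(k)\|/\|\Delta u_i(k)\|\le L_i+\bar B_i$. Setting $\Phi_i(k)=\frac{\partial\bar f_i}{\partial u_i}\big|_{*}+\eta_i^{*}(k)$ then gives exactly \eqref{eq12}, with $\|\Phi_i(k)\|\le 2\bar B_i+L_i=:\phi_i$, a positive constant independent of $k$. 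On the degenerate steps where $\Delta u_i(k)=\mathbf 0$, Assumption~\ref{ass:4} forces $\Delta u_j(k)=\mathbf 0$ for every neighbor as well, hence $\Delta\xi_i(k+1)=\mathbf 0$ and \eqref{eq12} holds for any bounded $\Phi_i(k)$.

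I expect the only genuinely delicate point to be stage (i): obtaining a Lipschitz-type bound on $\Delta\xi_i(k+1)$ in terms of $\Delta u_i(k)$ at the \emph{same} time index, rather than in terms of $\Delta u_i(k-1)$ or of the output increments $\Delta y(k)$. It is precisely the ``linear combination of one-step-ahead outputs'' structure of \eqref{eq10}, together with Assumption~\ref{ass:4} (which lets every neighbor's input increment be dominated by agent $i$'s), that makes this possible; without it the cross terms $\Delta u_j(k)$ and the $\Delta y(k)$ terms in $\psi_i(k)$ would only be controllable through $\Delta u(k-1)$, and the one-line division trick in stage (iii) could no longer certify boundedness of $\Phi_i(k)$. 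The remaining steps are routine mean-value-theorem bookkeeping plus the standard fact that an underdetermined linear system $Ax=b$ with $x\neq\mathbf 0$ admits a solution of norm $\|b\|/\|x\|$.
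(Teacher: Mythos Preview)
Your proposal is correct and follows essentially the same route as the paper: the mean-value-theorem split of $\Delta\xi_i(k+1)$ into a Jacobian term plus a remainder $\Psi_i(k)$, the data-equation trick to absorb $\Psi_i(k)$ into a matrix times $\Delta u_i(k)$, and the Lipschitz-type bound obtained from \eqref{eq9}--\eqref{eq10} via Assumptions~\ref{ass:3} and~\ref{ass:4}. The only notable difference is in the final boundedness step: the paper infers $\|\Phi_i(k)\|\le\phi_i$ with $\phi_i$ exactly your $L_i$, directly from $\|\Delta\xi_i(k+1)\|\le\phi_i\|\Delta u_i(k)\|$ and \eqref{eq12}, implicitly selecting the minimum-norm solution; your route through an explicit $\eta_i^{*}$ and a Jacobian bound $\bar B_i$ yields the larger constant $L_i+2\bar B_i$ and tacitly assumes the signals stay in a compact set---you can sidestep that by taking $\Phi_i(k)=\Delta\xi_i(k+1)\,\Delta u_i(k)^{\mathrm T}/\|\Delta u_i(k)\|^{2}$ from the outset, which gives $\|\Phi_i(k)\|\le L_i$ without any appeal to $\bar B_i$.
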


\begin{proof}
\label{prf:1}
Based on system \eqref{eq11}, $\Delta \xi_{i}(k+1)$ is written as
\vspace{-5pt} \begin{equation}
\label{eq13}
\begin{aligned}
&\Delta \xi_{i}(k+1)=  \bar{f}_{i}\left(y_{i}(k), u_{i}(k), y_{j}(k), u_{j}(k)\right)
\\&
-\bar{f}_{i}\left(y_{i}(k-1), u_{i}(k-1), y_{j}(k-1), u_{j}(k-1)\right) 
\\&+\bar{f}_{i}\left(y_{i}(k), u_{i}(k-1), y_{j}(k), u_{j}(k)\right) 
\\&-\bar{f}_{i}\left(y_{i}(k), u_{i}(k-1), y_{j}(k), u_{j}(k)\right)
\end{aligned}
\end{equation}

Based on the differential mean value theorem and take the partial derivative of $\bar{f}_{i}\left(y_{i}(k), u_{i}(k), y_{j}(k), u_{j}(k)\right)-\bar{f}_{i}\left(y_{i}(k), u_{i}(k-1), y_{j}(k),u_{j}(k)\right)$ with respect to $u_{i}(k)$ yields
\vspace{-5pt} \begin{equation}
\label{eq14}
\Delta \xi_{i}(k+1)= \bar{F}_{i}^{*}(k) \Delta u_{i}(k)+\Psi_{i}(k)
\end{equation}
where $\bar{F}_i^*(k)=\begin{bmatrix}
    \bar{F}_{i,rs}^*(k)
\end{bmatrix}$ represents the Jacobian matrix of $\bar{f}_{i}$ with respect to $u_{i}(k)$ in the interval $\left[u_{i}(k-1), u_{i}(k)\right]$. $\bar{F}_{i,rs}^*(k)=\frac{\partial \bar{f}_{i,r}^*}{\partial  u_{i,s}(k)},r=1,\cdots,p,s=1,\cdots,q$. $\Psi_{i}(k)=\bar{f}_{i}\left(y_{i}(k), u_{i}(k-1), y_{j}(k), u_{j}(k)\right)-\bar{f}_{i}\left(y_{i}(k-1),u_{i}(k-1), y_{j}(k-1), u_{j}(k-1)\right)$.  

Consider the data equation $\Psi_{i}(k)=E_{i}(k) \Delta u_{i}(k)+$ $E_{j}(k) \Delta u_{j}(k)$ with variable matrix $E_{i}(k)$ and $E_{j}(k)$ for each fixed time $k$. Based on Assumption \ref{ass:4}, there exists one solution $E_{i}^{*}(k)$ such that $\Psi_{i}(k)=E_{i}^{*}(k) \Delta u_{i}(k)$ holds. By setting $\Phi_{i}(k)=$ $\left(\partial \bar{f}_{i}^{*} / \partial u_{i}(k)\right)+E_{i}^{*}(k)$, we derive \eqref{eq12}. Based on Assumptions \ref{ass:3} and  \ref{ass:4}, and using \eqref{eq9}, we have
\begin{flalign}
\label{eq15}
\begin{aligned}
&\|\Delta\xi_{i}(k+1)\|\leq \sum_{j \in \mathcal{N}_{i}}  \left|\frac{\bar{a}_{ij}}{s_j}\right|\|\Delta y_j(k+1)\|\quad\quad\quad
\\&+\sum_{j \in \mathcal{N}_{i}} \left|\frac{\bar{a}_{ij}}{s_i}\right|\|\Delta y_i(k+1)\|+\left|\frac{g_i}{s_i}\right|\|\Delta y_i(k+1)\|
\\&\leq \sum_{j \in \mathcal{N}_{i}}  \phi_{j}^y\left|\frac{\bar{a}_{ij}}{s_j}\right|\|\Delta u_j(k)\|+\sum_{j \in \mathcal{N}_{i}} \phi_{i}^y\left|\frac{\bar{a}_{ij}}{s_i}\right|\|\Delta u_i(k)\|
\\&+\phi_{i}^y\left|\frac{g_i}{s_i}\right|\|\Delta u_i(k)\|
\\&\leq \left(
\sum_{j \in \mathcal{N}_{i}}  \sigma_{ij}\phi_{j}^y\left|\frac{\bar{a}_{ij}}{s_j}\right|+\sum_{j \in \mathcal{N}_{i}} \phi_{i}^y\left|\frac{\bar{a}_{ij}}{s_i}\right|+\phi_{i}^y\left|\frac{g_i}{s_i}\right|\right)\|\Delta u_i(k)\|
\\&\leq\phi_{i}\|\Delta u_i(k)\|
\end{aligned}\raisetag{7\baselineskip}
\end{flalign}
where $\phi_{i}=\sum_{j \in \mathcal{N}_{i}}  \sigma_{ij}\phi_{j}^y\left|\frac{\bar{a}_{ij}}{s_j}\right|+\sum_{j \in \mathcal{N}_{i}} \phi_{i}^y\left|\frac{\bar{a}_{ij}}{s_i}\right|+\phi_{i}^y\left|\frac{g_i}{s_i}\right|$ is a positive constant. Hence, as indicated by \eqref{eq12}, the PPJM $\Phi_{i}(k)$ is bounded and satisfies $\left\|\Phi_{i}(k)\right\| \leqslant \phi_{i}$.
\end{proof}

\section{DRMFAC algorithm design}
\label{sec:design}
In general, the actual value of the time-varying PPJM $\Phi_{i}(k)$ is difficult to obtain. Hence, we define the following performance function to estimate the PPJM $\Phi_{i}(k)$
\vspace{-5pt} \begin{equation}
\label{eq16}
\begin{gathered}
J_{\hat{\Phi}_i}(\hat{\Phi}_{i}(k))=\left(\Delta \xi_{i}(k)-\hat{\Phi}_{i}(k) \Delta u_{i}(k-1)\right)^{\mathrm{T}}Q_i^{\hat{\Phi}}\times
\\ \bigg(\Delta \xi_{i}(k)-\hat{\Phi}_{i}(k) \Delta u_{i}(k-1)\bigg)+\mu_i\left\|\Delta\hat{\Phi}_i(k)\right\|^2
\end{gathered}
\end{equation}
where $Q_i^{\hat{\Phi}}\succeq I$ is a weight matrix, and $\mu_i>0$ is a weight factor. $\hat{\Phi}_{i}(k)\in \mathbb{R}^{p \times q}$ is the estimation of PPJM $\Phi_{i}(k)$. Applying the stationarity condition $\frac{\partial J_{\hat{\Phi}_i}(\hat{\Phi}_{i}(k))}{\partial \hat{\Phi}_{i}(k)} = 0$ to \eqref{eq16} yields
\vspace{-5pt} \begin{equation}
\label{eq17}
\begin{gathered}
-\Delta u_{i}^{\mathrm{T}}(k-1)Q_i^{\hat{\Phi}} \bigg(\Delta \xi_{i}(k)-\hat{\Phi}_{i}(k) \Delta u_{i}(k-1)\bigg)
\\+\mu_i\left\|\Delta\hat{\Phi}_i(k)\right\|=0
\end{gathered}
\end{equation}
Using \eqref{eq17} to design the following update formula for $\hat\Phi_i(k)$
\begin{flalign}
\label{eq18}
\begin{aligned}
\hat{\Phi}_{i}(k)=&\hat{\Phi}_{i}(k-1) +\frac{\eta_{i1}  \left(\xi_{i}(k)-\xi_{i}(k-1)\right) \Delta u_{i}^{\mathrm{T}}(k-1)}{\Delta u_{i}^{\mathrm{T}}(k-1)Q_i^{\hat{\Phi}}\Delta u_{i}(k-1)+\mu_{i}}
\\&-\frac{\eta_{i1} \hat{\Phi}_{i}(k-1) \Delta u_{i}(k-1)\Delta u_{i}^{\mathrm{T}}(k-1)}{\Delta u_{i}^{\mathrm{T}}(k-1)Q_i^{\hat{\Phi}}\Delta u_{i}(k-1)+\mu_{i}}
\end{aligned}\raisetag{2\baselineskip}
\end{flalign}
where $\eta_{i1}\in(0, 2]$ is the step size used to update $\hat{\Phi}_{i}(k)$.

Furthermore, we design the following observer to estimate the NABCE   
\vspace{-5pt} \begin{equation}    \label{eq19}\hat{\xi}_{i}(k+1)=\hat{\xi}_{i}(k)+\hat{\Phi}_{i}(k) \Delta u_{i}(k)+K_{i}\left(\xi_{i}(k)-\hat{\xi}_{i}(k)\right)
\end{equation}
where $K_{i}=\mathrm{diag}(k_{i,r}), k_{i,r}\in(0, 2), r=1,\cdots, p$ is the observer gain matrix. To construct the DRMFAC algorithm, a performance function is defined as
\vspace{-5pt} \begin{equation}
\label{eq20}
J_{u_i}\big(u_{i}(k)\big)=\hat{\xi}_{i}^{\mathrm{T}}(k+1)Q_i^u\hat{\xi}_i(k+1)+\Delta u_{i}^{\mathrm{T}}(k)R_i^u\Delta u_{i}(k)\end{equation}
where $Q_i^u=\mathrm{diag}(\varrho_{i,r}),\varrho_{i,r}>0,r=1,\cdots,p$, and $R_i^u\succ 0$. Substitute \eqref{eq19} to \eqref{eq20} and apply the stationarity condition $\frac{\partial J_{u_i}\big(u_{i}(k)\big)}{\partial \Delta u_{i}(k)} = 0$, we obtain
\vspace{-5pt} \begin{equation}
\label{eq21}
\begin{aligned}
    \Delta u_{i}(k)  = &-\bigg(\hat{\Phi}_{i}(k)^{\mathrm{T}}Q_i^u\hat{\Phi}_{i}(k)+ R_i^u\bigg)^{-1}\times
    \\&\hat{\Phi}_{i}(k)^{\mathrm{T}}Q_i^u\bigg(\hat{\xi}_{i}(k)+K_{i}\left(\xi_{i}(k)-\hat{\xi}_{i}(k)\right)\bigg)
\end{aligned}
\end{equation}

Based on \eqref{eq21}, we design the following iterative formula to update control policy $u_i(k)$
\vspace{-5pt} \begin{equation}
\begin{aligned}
\label{eq22}
    u_i(k) = &u_i(k-1) -\frac{\eta_{i2}\hat{\Phi}_{i}(k)^{\mathrm{T}}Q_i^u}{\|\hat{\Phi}_{i}(k)^{\mathrm{T}}Q_i^u\hat{\Phi}_{i}(k)\|+ \|R_i^u\|}\hfill 
    \\&\times \bigg(\hat{\xi}_{i}(k)+K_{i}\left(\xi_{i}(k)-\hat{\xi}_{i}(k)\right)\bigg)
    \end{aligned}
\end{equation}
where $\eta_{i2}\in(-1, 0)$ is the step size used to update $u_i(k)$.

To mitigate the adverse effects of DoS attacks, we now present the following attack compensation mechanism 
\vspace{-5pt} \begin{equation}
\label{eq23}
\xi_{i}^c(k)=H_{i}(k) \xi_{i}(k)+\left(I-H_{i}(k)\right) \xi_{i}(k-1)
\end{equation}

Substitute \eqref{eq23}  into \eqref{eq18}, \eqref{eq19}, and \eqref{eq22}, we obtain the following DRMFAC algorithm against DoS attacks
\begin{flalign}
\label{eq24}
\begin{aligned}
\hat{\Phi}_{i}(k)=&\hat{\Phi}_{i}(k-1) +\frac{\eta_{i1} \left( \xi_{i}^c(k)-\xi_i(k-1)\right) \Delta u_{i}^{\mathrm{T}}(k-1)}{\Delta u_{i}^{\mathrm{T}}(k-1)Q_i^{\hat{\Phi}}\Delta u_{i}(k-1)+\mu_{i}}
\\&-\frac{\eta_{i1} \hat{\Phi}_{i}(k-1) \Delta u_{i}(k-1)\Delta u_{i}^{\mathrm{T}}(k-1)}{\Delta u_{i}^{\mathrm{T}}(k-1)Q_i^{\hat{\Phi}}\Delta u_{i}(k-1)+\mu_{i}}
\end{aligned}\raisetag{2\baselineskip}
\end{flalign}
\vspace{-5pt} \begin{equation}
\label{eq26}\hat{\xi}_{i}(k+1)=\hat{\xi}_{i}(k)+\hat{\Phi}_{i}(k) \Delta u_{i}(k)+K_{i}\left(\xi_{i}^c(k)-\hat{\xi}_{i}(k)\right)
\end{equation}
\begin{flalign}
\label{eq27}
\begin{aligned}
    u_i(k) = &u_i(k-1) -\frac{\eta_{i2}\hat{\Phi}_{i}(k)^{\mathrm{T}}Q_i^u}{\|\hat{\Phi}_{i}(k)^{\mathrm{T}}Q_i^u\hat{\Phi}_{i}(k)\|+ \|R_i^u\|}\hfill 
    \\&\times \bigg(\hat{\xi}_{i}^c(k)+K_{i}\left(\xi_{i}(k)-\hat{\xi}_{i}(k)\right)\bigg)
    \end{aligned}\raisetag{2.2\baselineskip}
\end{flalign}

We present the following boundedness and resilience analysis of the proposed controller design.

\begin{theorem}\label{thm:1}
Given Assumptions \ref{ass:1}, \ref{ass:2}, \ref{ass:3} and \ref{ass:4} and specified parameters $\mu_i>0$, $\eta_{i1}\in(0,2]$, and $\eta_{i2}\in(-1,0)$. $R_i^u\succ 0$ and $Q_i^{\hat{\Phi}}\succeq 
I$. $Q_i^u=\mathrm{diag}(\varrho_{i,r}),\varrho_{i,r}>0$, and  $K_i=\mathrm{diag}(k_{i,r}), k_{i,r}\in(0,2),r=1,\cdots,p$. For MAS \eqref{eq1} with DoS attacks as in \eqref{eq8}, the URABC problem in Definition \ref{prb:1} is addressed using the DRMFAC protocols \eqref{eq24}, \eqref{eq26}, and \eqref{eq27}.
\end{theorem}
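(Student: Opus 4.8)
The plan is to establish uniform boundedness for a cascade of three error signals --- the PPJM estimation error $\tilde\Phi_i(k)=\hat\Phi_i(k)-\Phi_i(k)$, the EDSO error $\tilde\xi_i(k)=\xi_i(k)-\hat\xi_i(k)$, and the NABCE $\xi_i(k)$ itself (in the mean-square sense, which here also gives path-wise boundedness because every DoS-induced term is bounded) --- and then transfer boundedness of the global $\xi(k)$ to $e_{y_i}(k)$ through the identity \eqref{eq7}. \textbf{Step 1 (the PPJM estimate is bounded).} From \eqref{eq8} and \eqref{eq23} one has $\xi_i^c(k)-\xi_i(k-1)=H_i(k)\Delta\xi_i(k)$, and by \eqref{eq12}, $\Delta\xi_i(k)=\Phi_i(k-1)\Delta u_i(k-1)$; substituting into \eqref{eq24} and subtracting $\Phi_i(k)$ gives $\tilde\Phi_i(k)=\tilde\Phi_i(k-1)\big(I-\eta_{i1}\Delta u_i(k-1)\Delta u_i^{\mathrm{T}}(k-1)/D_i(k-1)\big)+\big(\Phi_i(k-1)-\Phi_i(k)\big)+\eta_{i1}\big(H_i(k)-I\big)\Phi_i(k-1)\Delta u_i(k-1)\Delta u_i^{\mathrm{T}}(k-1)/D_i(k-1)$, where $D_i(k-1)=\Delta u_i^{\mathrm{T}}(k-1)Q_i^{\hat\Phi}\Delta u_i(k-1)+\mu_i$. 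Since $Q_i^{\hat\Phi}\succeq I$, $\mu_i>0$, and $\eta_{i1}\in(0,2]$, the first factor is non-expansive and both forcing terms are bounded in terms of $\phi_i$ through Lemma~\ref{lem:1}; together with the customary projection/reset of $\hat\Phi_i(k)$ onto the ball of radius $\phi_i$ whenever that bound or a sign condition is violated, this yields $\|\hat\Phi_i(k)\|\le\phi_i$ and a uniform bound on $\|\tilde\Phi_i(k)\|$.

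\textbf{Step 2 (the EDSO error and the NABCE are bounded jointly).} Using $\xi_i(k+1)=\xi_i(k)+\Phi_i(k)\Delta u_i(k)$, the observer \eqref{eq26}, and the identity $\xi_i(k-1)-\hat\xi_i(k)=\tilde\xi_i(k)-\Phi_i(k-1)\Delta u_i(k-1)$, the observer error collapses to $\tilde\xi_i(k+1)=(I-K_i)\tilde\xi_i(k)-\tilde\Phi_i(k)\Delta u_i(k)+K_i\big(I-H_i(k)\big)\Phi_i(k-1)\Delta u_i(k-1)$, a contraction in $\tilde\xi_i$ because $K_i=\mathrm{diag}(k_{i,r})$ with $k_{i,r}\in(0,2)$ makes $\|I-K_i\|<1$. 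Substituting the control increment from \eqref{eq27}, and writing $M_i(k)$ for its gain matrix (bounded, because $\|R_i^u\|>0$ bounds its denominator below and $\|\hat\Phi_i(k)\|\le\phi_i$ from Step~1 bounds its numerator), the NABCE obeys $\xi_i(k+1)=\big(I-\Phi_i(k)M_i(k)\big)\xi_i(k)+\Phi_i(k)M_i(k)(I-K_i)\tilde\xi_i(k)+(\text{a bounded DoS-induced term})$; invoking $\hat\Phi_i(k)\approx\Phi_i(k)$ from Step~1 and the parameter ranges $\eta_{i2}\in(-1,0)$, $Q_i^u\succ0$, $R_i^u\succ0$, I would bound the eigenvalues of the non-symmetric matrix $\Phi_i(k)M_i(k)$ by Lemma~\ref{lem:2} and pass to a consistent induced norm via Lemma~\ref{lem:3} to obtain $\|I-\Phi_i(k)M_i(k)\|_d\le\theta<1$. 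Since $\Delta u_i(k)$ from \eqref{eq27} is a bounded function of $\hat\xi_i(k),\tilde\xi_i(k)$, a composite bound $V_i(k)=\|\xi_i(k)\|^2+\alpha_i\|\tilde\xi_i(k)\|^2$ with $\alpha_i>0$ chosen to dominate the cross terms satisfies $V_i(k+1)\le\theta' V_i(k)+c_i$ for some $\theta'<1$, $c_i>0$, so $\xi_i(k)$, $\tilde\xi_i(k)$, and $\Delta u_i(k)$ are uniformly bounded.

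\textbf{Step 3 (from $\xi$ to $e_y$).} By \eqref{eq7}, $\xi(k)=-\big((\bar{\mathcal L}+\mathcal G)\otimes I_p\big)e_y(k)$; under Assumption~\ref{ass:1} the structural balance together with a leader-rooted spanning tree makes $\bar{\mathcal L}+\mathcal G$ nonsingular (as already used in the proof of Lemma~\ref{lem:4}), so $e_y(k)=-\big((\bar{\mathcal L}+\mathcal G)^{-1}\otimes I_p\big)\xi(k)$ and, since each $\xi_i(k)$ is bounded by Step~2, $\|e_{y_i}(k)\|\le\|(\bar{\mathcal L}+\mathcal G)^{-1}\|\,\|\xi(k)\|\le b_i$, which is exactly the requirement of Definition~\ref{prb:1}.

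\textbf{Main obstacle.} The difficulty concentrates in Step~2: the estimator, the EDSO, and the closed-loop NABCE form a coupled system in which the Bernoulli matrices $H_i(k)$ enter multiplicatively and the compensation \eqref{eq23} injects a one-step-delayed signal, so none of the three errors can be bounded in isolation and the circular dependence between boundedness of $\Delta u_i(k)$ and that of $\xi_i(k),\tilde\xi_i(k)$ must be broken. Establishing the contraction $\|I-\Phi_i(k)M_i(k)\|_d<1$ uniformly in $k$ --- robust to the time variation of both $\Phi_i(k)$ and the estimate $\hat\Phi_i(k)$, which is where Lemmas~\ref{lem:2} and~\ref{lem:3} and the parameter constraints on $\eta_{i2}$, $Q_i^u$, $R_i^u$, $K_i$ become essential --- and selecting the Lyapunov weight $\alpha_i$ so that the gains from $\tilde\Phi_i(k)\Delta u_i(k)$ and from the delayed observer term cannot overcome the contraction, is the heart of the argument.
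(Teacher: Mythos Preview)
Your cascade structure (PPJM error $\to$ observer error $\to$ consensus error $\to$ $e_y$) matches the paper, but two choices in Step~2 diverge from it and one of them creates a real gap. The paper does \emph{not} analyze $\xi_i(k+1)$ directly; instead it substitutes \eqref{eq27} into \eqref{eq26} and studies $\hat{\xi}_i(k+1)$, obtaining the closed form $\hat{\xi}_i(k+1)=\Upsilon_i(k)\big(\hat{\xi}_i(k)+K_i(\xi_i^c(k)-\hat{\xi}_i(k))\big)$ with $\Upsilon_i(k)=I-\eta_{i2}\hat{\Phi}_i(k)\hat{\Phi}_i^{\mathrm T}(k)Q_i^u/(\|\hat{\Phi}_i^{\mathrm T}Q_i^u\hat{\Phi}_i\|+\|R_i^u\|)$. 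The point is that this contraction matrix involves only $\hat{\Phi}_i(k)$, so Lemmas~\ref{lem:2} and~\ref{lem:3} apply directly to give $\|\Upsilon_i(k)\|_d<1$; once $\hat{\xi}_i$ is bounded, $\xi_i=\hat{\xi}_i+\tilde{\xi}_i$ follows by addition. Your route instead attacks $I-\Phi_i(k)M_i(k)$, which mixes the \emph{true} $\Phi_i(k)$ with a gain built from $\hat{\Phi}_i(k)$. The sentence ``invoking $\hat\Phi_i(k)\approx\Phi_i(k)$ from Step~1'' is the gap --- Step~1 yields only a uniform \emph{bound} on $\|\tilde{\Phi}_i(k)\|$, not smallness, so you cannot transfer a Gershgorin-disk contraction from $\hat{\Phi}_i\hat{\Phi}_i^{\mathrm T}$ to $\Phi_i\hat{\Phi}_i^{\mathrm T}$ without further argument.

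The paper also sidesteps the circularity you flag as the main obstacle: it invokes Remark~\ref{rem:2} to declare $\Delta u_i(k)$ bounded \emph{a priori}, which lets it bound $\tilde{\xi}_i$ first (using only $\|I-K_i\|<1$ and the already-established bounds on $\tilde{\Phi}_i$, $\Phi_i$) and then $\hat{\xi}_i$, sequentially rather than through a joint Lyapunov function $V_i(k)=\|\xi_i(k)\|^2+\alpha_i\|\tilde{\xi}_i(k)\|^2$. Finally, your Step~1 appeals to a ``customary projection/reset'' of $\hat{\Phi}_i(k)$; no such mechanism appears in the protocols \eqref{eq24}--\eqref{eq27}, and the paper instead argues row-by-row that the update itself is a contraction plus bounded forcing, so $\hat{\Phi}_i(k)$ stays bounded without projection. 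Your Step~3 is fine and coincides with the paper's use of Lemma~\ref{lem:4}.
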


\begin{proof}
\label{prf:2}Define $\tilde{\Phi}_{i}(k)=\hat{\Phi}_{i}(k)-\Phi_{i}(k)$  as the estimation error for $\Phi_{i}(k)$ with $\Delta \xi_{i}(k)=\Phi_{i}(k-1) \Delta u_{i}(k-1)$ in \eqref{eq12},
\begin{flalign}
\label{eq28}
\begin{aligned}
 &\tilde{\Phi}_{i}(k) =\hat{\Phi}_{i}(k-1)-\Phi_{i}(k)
\\&+\frac{\eta_{i1} \left(\xi_{i}^c(k)-\xi_{i}(k-1)\right) \Delta u_{i}^{\mathrm{T}}(k-1)}{\Delta u_{i}^{\mathrm{T}}(k-1)Q_i^{\hat{\Phi}}\Delta u_{i}(k-1)+\mu_{i}}
\\&-\frac{\eta_{i1} \hat{\Phi}_{i}(k-1) \Delta u_{i}(k-1) \Delta u_{i}^{\mathrm{T}}(k-1)}{\Delta u_{i}^{\mathrm{T}}(k-1)Q_i^{\hat{\Phi}}\Delta u_{i}(k-1)+\mu_{i}}
\\ &=\tilde{\Phi}_{i}(k-1)+ \Phi_{i}(k-1)-\Phi_{i}(k)\quad\quad\quad
\\&+\frac{ \eta_{i1}}{\Delta u_{i}^{\mathrm{T}}(k-1)Q_i^{\hat{\Phi}}\Delta u_{i}(k-1)+\mu_{i}}
\\&\times \bigg(H_{i}(k)\xi_{i}(k)+\left(I-H_{i}(k)\right) \xi_{i}(k-1)\bigg.
\\&\bigg.-\xi_{i}(k-1)\bigg)\Delta u_{i}^{\mathrm{T}}(k-1)
\\&-\frac{\eta_{i1} \hat{\Phi}_{i}(k-1) \Delta u_{i}(k-1) \Delta u_{i}^{\mathrm{T}}(k-1)}{\Delta u_{i}^{\mathrm{T}}(k-1)Q_i^{\hat{\Phi}}\Delta u_{i}(k-1)+\mu_{i}}
\\ &=\tilde{\Phi}_{i}(k-1)- \Delta\Phi_{i}(k)
\\&+\frac{ \eta_{i1}\bigg(H_{i}(k)\Delta \xi_{i}(k)-\Delta \xi_{i}(k)\bigg)\Delta u_{i}^{\mathrm{T}}(k-1)}{\Delta u_{i}^{\mathrm{T}}(k-1)Q_i^{\hat{\Phi}}\Delta u_{i}(k-1)+\mu_{i}}
\\&+\frac{ \eta_{i1} \Delta \xi_{i}(k)\Delta u_{i}^{\mathrm{T}}(k-1)}{\Delta u_{i}^{\mathrm{T}}(k-1)Q_i^{\hat{\Phi}}\Delta u_{i}(k-1)+\mu_{i}}
\\&-\frac{\eta_{i1} \hat{\Phi}_{i}(k-1) \Delta u_{i}(k-1) \Delta u_{i}^{\mathrm{T}}(k-1)}{\Delta u_{i}^{\mathrm{T}}(k-1)Q_i^{\hat{\Phi}}\Delta u_{i}(k-1)+\mu_{i}}
\\ &=\tilde{\Phi}_{i}(k-1)- \Delta\Phi_{i}(k)
\\&+\frac{ \eta_{i1}\bigg(H_{i}(k) -I\bigg)\Delta\Phi_{i}(k-1)\Delta u_{i}(k-1)\Delta u_{i}^{\mathrm{T}}(k-1)}{\Delta u_{i}^{\mathrm{T}}(k-1)Q_i^{\hat{\Phi}}\Delta u_{i}(k-1)+\mu_{i}}
\\&-\frac{ \eta_{i1} \Delta\tilde{\Phi}_{i}(k-1)\Delta u_{i}(k-1)\Delta u_{i}^{\mathrm{T}}(k-1)}{\Delta u_{i}^{\mathrm{T}}(k-1)Q_i^{\hat{\Phi}}\Delta u_{i}(k-1)+\mu_{i}}
\end{aligned}\raisetag{14\baselineskip}
\end{flalign}

Denote
$\tilde{\Phi}_{i}(k)=\begin{bmatrix}
\tilde{\Phi}_{i}^{1}(k)^{\mathrm{T}},
\cdots,
\tilde{\Phi}_{i}^{p}(k)^{\mathrm{T}}
\end{bmatrix}^{\mathrm{T}}$, $\Phi_{i}(k)=\begin{bmatrix}
\Phi_{i}^{1}(k)^{\mathrm{T}},
\cdots,
\Phi_{i}^{p}(k)^{\mathrm{T}}
\end{bmatrix}^{\mathrm{T}}$, where $\tilde{\Phi}_{i}^{s}(k)=\begin{bmatrix}\tilde{\Phi}_{i, 1}^s, \cdots\tilde{\Phi}_{i, q}^s\end{bmatrix} \in \mathbb{R}^{1 \times q}$,
$\Phi_{i}^{s}(k)=\begin{bmatrix}\Phi_{i, 1}^s, \cdots, \Phi_{i, q}^s \end{bmatrix} \in \mathbb{R}^{1 \times q}, s=1, \ldots, p$, we have
\begin{flalign}
\label{eq29}
{\begin{aligned}
&\tilde{\Phi}_{i}^{s}(k)= \tilde{\Phi}_{i}^{s}(k-1)\left(I-\frac{\eta_{i1} \Delta u_{i}(k-1) \Delta u_{i}^{\mathrm{T}}(k-1)}{\Delta u_{i}^{\mathrm{T}}(k-1)Q_i^{\hat{\Phi}}\Delta u_{i}(k-1)+\mu_{i}}\right)
\\&+\big(h_{i,r}(k)-1\big)\frac{\eta_{i1} \Phi_{i}^{s}(k-1)\Delta u_{i}(k-1)\Delta u_{i}^{\mathrm{T}}(k-1) }{\Delta u_{i}^{\mathrm{T}}(k-1)Q_i^{\hat{\Phi}}\Delta u_{i}(k-1)+\mu_{i}}
\\&-\Delta \Phi_{i}^{s}(k)
\end{aligned}}\raisetag{2.5\baselineskip}  
\end{flalign}

Since $\eta_{i1} \in(0,2]$, $Q_i^{\hat{\Phi}}\succeq 
I$, and $\mu_{i}>0$,
$\frac{\left\|\eta_{i1} \Delta u_{i}(k-1) \Delta u_{i}^{\mathrm{T}}(k-1)\right\|}{\Delta u_{i}^{\mathrm{T}}(k-1)Q_i^{\hat{\Phi}}\Delta u_{i}(k-1)+\mu_{i}}<2$. Based on \eqref{eq29}, there exists a constant $\sigma_{i} \in (0,1)$ such that
\begin{flalign}
\label{eq30}
\begin{aligned}&\left\|\tilde{\Phi}_{i}^{s}(k-1)\left(I-\frac{\eta_{i1} \Delta u_{i}(k-1) \Delta u_{i}^{\mathrm{T}}(k-1)}{\Delta u_{i}^{\mathrm{T}}(k-1)Q_i^{\hat{\Phi}}\Delta u_{i}(k-1)+\mu_{i}}\right)\right\| 
\\&\leqslant \sigma_{i}\left\|\tilde{\Phi}_{i}^{s}(k-1)\right\|\end{aligned}&&
\raisetag{2\baselineskip}
\end{flalign}

Given that both $\Phi_{i}(k)$ and then $\Phi_{i}^s(k)$ are bounded, it follows that $\Delta\Phi_{i}(k)$ and $\Delta\Phi_{i}^s(k)$ are also bounded, we obtain
\vspace{-5pt} \begin{equation}
\label{eq31}
    \left\|\Delta \Phi_{i}^{s}(k)\right\| \leqslant \bar{\theta}_{1}
\end{equation}
\begin{flalign}
\label{eq32}
\begin{gathered}
\Bigg\|\frac{\eta_{i1}\left(h_{i,r}(k)-1\right)\Phi_{i}^{s}(k-1) \Delta u_{i}(k-1) \Delta u_{i}^{\mathrm{T}}(k-1)}{\Delta u_{i}^{\mathrm{T}}(k-1)Q_i^{\hat{\Phi}}\Delta u_{i}(k-1)+\mu_{i}}\Bigg\| \leqslant \bar{\theta}_{2}
\end{gathered}
\end{flalign}
Using \eqref{eq31} and \eqref{eq32} yields
\begin{flalign}
\label{eq33}
\begin{aligned}
&\left\|\tilde{\Phi}_{i}^{s}(k)\right\|  \leqslant \sigma_{i}\left\|\tilde{\Phi}_{i}^{s}(k-1)\right\|+\left(\bar{\theta}_{1}+\bar{\theta}_{2}\right) \\&
\leqslant \sigma_{i}^{2}\left\|\tilde{\Phi}_{i}^{s}(k-2)\right\|+\sigma_{i}\left(\bar{\theta}_{1}+\bar{\theta}_{2}\right)+\left(\bar{\theta}_{1}+\bar{\theta}_{2}\right) \\&
 \leqslant \ldots 
 \leqslant \sigma_{i}^{k-1}\left\|\tilde{\Phi}_{i}^{s}(1)\right\|+\frac{1-\sigma_{i}^{k-1}}{1-\sigma_{i}}\left(\bar{\theta}_{1}+\bar{\theta}_{2}\right)
\end{aligned}\raisetag{3\baselineskip}
\end{flalign}

Hence, $\tilde{\Phi}_{i}^{s}(k)$ and then $\tilde{\Phi}_{i}(k)$ exhibit boundedness. Given that $\tilde{\Phi}_{i}(k)=\hat{\Phi}_{i}(k)-\Phi_{i}(k)$, the bound of $\hat{\Phi}_{i}(k)$ is guaranteed. That is $\left\|\hat{\Phi}_{i}(k)\right\| \leqslant \hat{\phi}_{i}$.

Define $\tilde{\xi}_i(k)=\xi_i(k)-\hat{\xi}_i(k)$ as the estimation error of the observer. According to \eqref{eq12} and \eqref{eq26}, we have
\begin{flalign}
\label{eq34}
\begin{aligned}
    &\tilde{\xi}_i(k+1) = \xi_i(k)+\Phi_i(k)\Delta u_i(k)
    \\&-\bigg(\hat{\xi}_{i}(k)+\hat{\Phi}_{i}(k) \Delta u_{i}(k)+K_{i}\left(\xi_{i}^c(k)-\hat{\xi}_{i}(k)\right)\bigg)
    \\&=\tilde{\xi}_i(k)-\tilde{\Phi}_i\Delta u_i(k)
    \\&-K_{i}\bigg(H_i(k)\Delta\xi_{i}(k)+\xi_{i}(k-1)-\hat{\xi}_{i}(k)\bigg)
    \\&=\tilde{\xi}_i(k)-\tilde{\Phi}_i\Delta u_i(k)
    -K_{i}\bigg(\left(H_i(k)-I\right)\Delta\xi_{i}(k)+\tilde{\xi}_{i}(k)\bigg)
    \\&=\left(I-K_i\right)\tilde{\xi}_i(k)-\tilde{\Phi}_i\Delta u_i(k)
    \\&+K_i\left(I-H_i(k)\right)
    \Phi_{i}(k-1)\Delta u_{i}(k-1)
\end{aligned}\raisetag{5\baselineskip}
\end{flalign}
Hence, given the established boundedness of $\tilde{\Phi}_{i}(k)$ and $\Delta u_{i}(k-1)$ in Remark \ref{rem:2}, there is a positive constant $\varsigma_i$ such that 
\vspace{-5pt} \begin{equation}
\label{eq35}
    \|\tilde{\Phi}_i\Delta u_i(k)\|+\|K_i\left(I-H_i(k)\right)\Phi_{i}(k-1)\Delta u_{i}(k-1)\|\leq \varsigma_i
\end{equation}

Taking the norm of \eqref{eq34} yields
\vspace{-5pt} \begin{equation}
\label{eq36}
\begin{aligned}
        &\left\|\tilde{\xi}(k+1)\right\|\leq \left\|I-K_i\right\|\left\|\tilde{\xi}_i(k)\right\|+\left\|\tilde{\Phi}_i\Delta u_i(k)\right\|
        \\&+\left\|K_i\left(I-H_i(k)\right)\Phi_{i}(k-1)\Delta u_{i}(k-1)\right\|
        \\&=\left\|I-K_i\right\|\left\|\tilde{\xi}_i(k)\right\|+\varsigma_i
        \\&\leq\left\|I-K_i\right\|^2\left\|\tilde{\xi}_i(k-1)\right\|+\left\|I-K_i\right\|\varsigma_i+\varsigma_i        \\&\leq\cdots\leq\left\|I-K_i\right\|^k\left\|\tilde{\xi}_i(1)\right\|+\frac{\varsigma_i\left(1-\left\|I-K_i\right\|^{k-1}\right)}{1-\left\|I-K_i\right\|}
    \end{aligned}
\end{equation}

Since $K_i=\mathrm{diag}(k_{i,r}), k_{i,r}\in(0,2),r=1,\cdots,p$, $\left\|I-K_i\right\|<1$, the boundedness of $\tilde{\xi}_i(k)$ is established. Using \eqref{eq27} to formulate \eqref{eq26} as
\begin{flalign}
\label{eq37}
\begin{aligned}
&\hat{\xi}_{i}(k+1)=\hat{\xi}_{i}(k)+K_{i}\left(\xi_{i}^c(k)-\hat{\xi}_{i}(k)\right)
\\&+\hat{\Phi}_{i}(k)\left(-\frac{\eta_{i2}\hat{\Phi}_{i}(k)^{\mathrm{T}}Q_i^u\bigg(\hat{\xi}_{i}(k)+K_{i}\left(\xi_{i}^c(k)-\hat{\xi}_{i}(k)\right)\bigg)}{\|\hat{\Phi}_{i}(k)^{\mathrm{T}}Q_i^u\hat{\Phi}_{i}(k)\|+ \|R_i^u\|}\right)
\\&=\Bigg(I-\frac{\eta_{i2}\hat{\Phi}_{i}(k)\hat{\Phi}_{i}(k)^{\mathrm{T}}Q_i^u}{\|\hat{\Phi}_{i}(k)^{\mathrm{T}}Q_i^u\hat{\Phi}_{i}(k)\|+ \|R_i^u\|}\Bigg)\times
\\&\bigg(\hat{\xi}_{i}(k)+K_{ i}\left(H_i(k)\Delta\xi_{i}(k)+\xi_{i}(k-1)-\hat{\xi}_{i}(k)\right)\bigg)
\\&=\left(I-\frac{\eta_{i2}\hat{\Phi}_{i}(k)\hat{\Phi}_{i}(k)^{\mathrm{T}}Q_i^u}{\|\hat{\Phi}_{i}(k)^{\mathrm{T}}Q_i^u\hat{\Phi}_{i}(k)\|+ \|R_i^u\|}\right)\times
\\&\left(\hat{\xi}_{i}(k)+K_{i}\left(\left(H_i(k)-I\right)\Phi_{i}(k-1)\Delta u(k-1)+\tilde{\xi}_{i}(k)\right)\right)
\end{aligned}\raisetag{7\baselineskip}
\end{flalign}

Let $\Upsilon_i(k) =I-\frac{\eta_{i2}\hat{\Phi}_{i}(k)\hat{\Phi}_{i}(k)^{\mathrm{T}}Q_i^u}{\|\hat{\Phi}_{i}(k)^{\mathrm{T}}Q_i^u\hat{\Phi}_{i}(k)\|+ \|R_i^u\|}$. According to Lemma \ref{lem:2} and triangle inequality, its Gershgorin disk follows
\begin{flalign}
\label{eq38}
\begin{aligned}
    D_{i,r}=&\left\{z_{i} \Bigg|\left| z_{i}\right|\leqslant\left|1-\frac{\eta_{i2} \sum_{s=1}^{q} \hat{\Phi}_{i, rs}^2(k) \varrho_{i,r}}{\|\hat{\Phi}_{i}(k)^{\mathrm{T}}Q_i^u\hat{\Phi}_{i}(k)\|+ \|R_i^u\|}\right|+\right.
    \\&\left.\left|\sum_{l=1, l \neq r}^{p} \frac{\eta_{i2} \sum_{s=1}^{q} \hat{\Phi}_{i, rs}(k) \hat{\Phi}_{i, ls}(k)\varrho_{i,r}}{\|\hat{\Phi}_{i}(k)^{\mathrm{T}}Q_i^u\hat{\Phi}_{i}(k)\|+ \|R_i^u\|}\right|\right\}
\end{aligned}\raisetag{2.5\baselineskip}
\end{flalign}

where $z_{i}$ is the characteristic root of $\Upsilon_i(k),r=1,\cdots,p,s=1,\cdots,q$. According to the proof of Theorem 2 in \cite{Hou2019a}, $z_{i}$ satisfies $\left|z_{i}\right|<1$ when $\left|\Phi_{i}(k)\right| \leqslant \phi_{i},\left|\hat{\Phi}_{i}(k)\right| \leqslant \hat{\phi}_{i}$. Hence, we have $\rho\left(\Upsilon_i(k)\right)<1$. Based on Lemma \ref{lem:3}, there is a positive constant $c_{i}$ such that
\vspace{-5pt} \begin{equation}
\label{eq39}
\begin{aligned}
\left\|\Upsilon_i(k)\right\|\leqslant\left\|\Upsilon_i(k)\right\|_{d} \leqslant 
\rho\left(\Upsilon_i(k)\right)+c_i
=\gamma_i<1
\end{aligned}
\end{equation}

Given the established boundedness of $\tilde{\xi}_{i}(k)$ and $\Delta u_{i}(k-1)$ in Remark \ref{rem:2}, there exists a positive constant $\varepsilon_i$ such that
\vspace{-5pt} \begin{equation}
\label{eq40}
\left\|K_{i}\left(\left(H_i(k)-I\right)\Phi_{i}(k-1)\Delta u(k-1)+\tilde{\xi}_{i}(k)\right)\right\| \leqslant \varepsilon_i
\end{equation}

Taking the norm of \eqref{eq37} and use \eqref{eq40}, we obtain
\begin{flalign}
\label{eq41}
\begin{aligned}
&\left\|\hat{\xi}_i(k+1)\right\|\leq\left\|\Upsilon_i(k)\right\|\left\|\hat{\xi_i}(k)\right\|+
\\&\left\|\Upsilon_i(k)\right\| \left\|K_{i}\left(\left(H_i(k)-I\right)\Phi_{i}(k-1)\Delta u(k-1)+\tilde{\xi}_{i}(k)\right)\right\|
\\&
\leqslant \gamma_i\|\hat{\xi}(k)\|+\gamma_i\varepsilon_i 
 \leqslant \gamma_i\|\hat{\xi}(k-1)\|+\gamma_i^2 \varepsilon_i+\gamma_i\varepsilon_i 
 \\&
 \leqslant \ldots 
 \leqslant \gamma_i^{k}\|\hat{\xi}(1)\|+\frac{\varepsilon_i\left(1-\gamma_i^{k-1}\right)}{1-\gamma_i}
\end{aligned}\raisetag{3\baselineskip}
\end{flalign}

That is, the boundedness of $\hat{\xi}_{i}(k)$ is guaranteed. Given the boundedness of $\hat{\xi}_{i}(k)$ and $\tilde{\xi}_{i}(k)$, the NABCE $\xi_{i}(k)$ is uniformly bounded. Then based on Lemma \ref{lem:4}, the local asymmetric bipartite consensus error $e_{y_i}$ is also bounded. Hence, the URABC problem in Definition \ref{prb:1} is solved.
\end{proof}
\section{Numerical Example}
\label{sec:simulation}
In this section, a numerical example is provided to validate the proposed results. Consider a communication digraph consisting of 6 agents shown in Fig.\ref{Fig.1}. Let $\mathscr{V}_{1}=\{1,2,4,6\}$ and $ \mathscr{V}_{2}=\{3,5\}$. The influence coefficients are $m=2$ and $n=4$. The dimension of input $u_i$ and output $y_i$ are $2$. The probability of the
successful DoS attacks are chosen as $\bar{h}_{i}=[0.2,0.3,0.24, 0.33,0.1,0.22],i=1,\cdots,6$. The models of the agents are \\$\begin{bmatrix}
    y_{i,1}(k+1)
    \\y_{i,2}(k+1)
\end{bmatrix}=
\begin{bmatrix}
    \frac{y_{i,1}(k) u_{i,1}(k)}{1+y_{i,1}(k)^{c_{i,1}}}+c_{i,2}u_{i,1}(k)
    \\\frac{y_{i,2}(k) u_{i,2}(k)}{1+y_{i,1}(k)^{c_{i,3}}+y_{i,2}(k)^{c_{i,3}}}+c_{i,4} u_{i,2}(k)
\end{bmatrix}$
where $k \in[1,2500]$, $c_{i,1}=[2,3,4,3,1,2]$, $c_{i,2}=[2,5,5,2,1,2]$, $ c_{i,3}=[1,0.9,0.6,1.1,1.3,1.5]$, $ c_{i,4}=[0.8,0.5,0.7,1.2,1.4,1.6],i=1,\cdots,6$. The desired reference signal is $y_d(k) = \begin{bmatrix}
        5
        &5
    \end{bmatrix}^T , k \in [0,1249]$ and $ y_d(k) =\begin{bmatrix}
        3
        &3
    \end{bmatrix}^T , k \in [1250,2500]$.
    
\begin{figure}[H]
\centering
\includegraphics[width=0.2\textwidth]{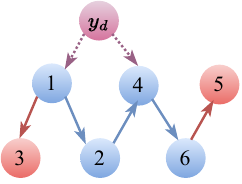}
\caption{The communication digraph of the networked MAS.} 
\label{Fig.1}
\end{figure}
The outputs of the agents using the proposed control protocols are profiled in Fig.\ref{Fig.2}. As seen, given $y_d=5$, $y_i$ for $i \in \mathscr{V}_{1}$ and $i \in \mathscr{V}_{2}$ converges to small boundaries around $10$ and $-20$, respectively. For $y_d=3$, the values are around $6$ and $-12$, respectively. This validates that by using the proposed DRMFAC algorithm, the URABC problem is solved for nonlinear MAS against DoS attacks. 
\begin{figure}[H]
\centering
\includegraphics[width=0.5 \textwidth]{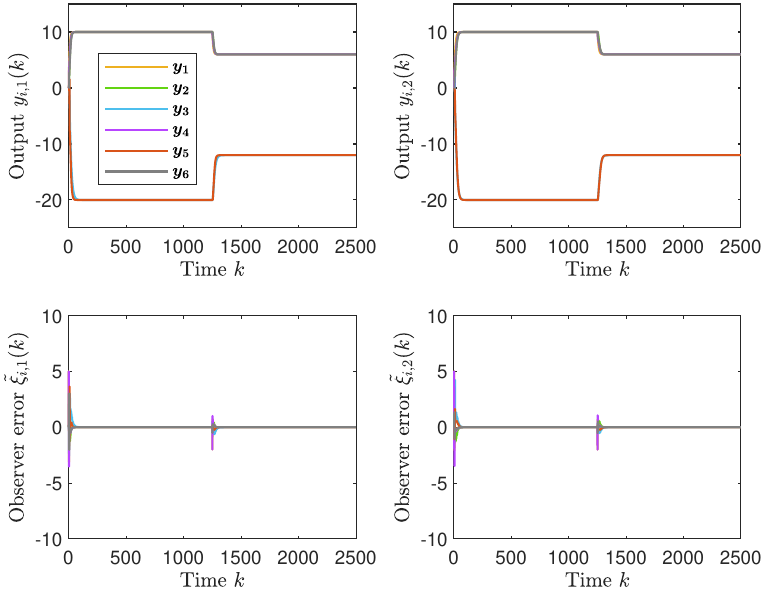}
\caption{Output trajectories $y_i$ and observer errors $\tilde{\xi}_i$ of the agents using the proposed control protocols.} 
\label{Fig.2}
\end{figure}

\section{CONCLUSION}

In this letter, we have addressed the URABC problem for nonlinear MAS under DoS attacks by our DRMFAC algorithm. We have proved that the URABC problem is solved by stabilizing the NABCE. Then, we have developed a DCFDL method to linearize the NABCE. Finally, we have used an EDSO to enhance the robustness against unknown dynamics and an attack compensation mechanism to eliminate the adverse effects of DoS attacks. 




\bibliographystyle{IEEEtran}
\bibliography{root}

\end{document}